\documentclass[conference]{IEEEtran}
\usepackage{newtxtext}
\usepackage{placeins}

\usepackage{amsmath,amssymb,amsthm}
\usepackage{graphicx}
\usepackage{booktabs}
\usepackage{multirow}
\usepackage{xcolor}
\usepackage{cite}
\usepackage[hidelinks]{hyperref}
\usepackage{tikz}

\newtheorem{proposition}{Proposition}

\newcommand{\sys}{AURA}
\newcommand{\Asl}{\ensuremath{A_{\mathrm{sl}}}}
\newcommand{\Aen}{\ensuremath{A_{\mathrm{en}}}}

\graphicspath{{paper-artifacts-v2/}}

\begin{document}

\title{Taming the Agentic RAN: Stability-Guaranteed Arbitration of\\ Autonomous AI Agents in O-RAN}
\author{
  \IEEEauthorblockN{Seyed Bagher Hashemi Natanzi, Bo Tang}
  \IEEEauthorblockA{
    Department of Electrical and Computer Engineering \\
    Worcester Polytechnic Institute, Worcester, MA, USA \\
    \{snatanzi, btang1\}@wpi.edu
  }
}

\maketitle
    \begin{tikzpicture}[remember picture, overlay]
        \node[anchor=north, yshift=-0.5cm] at (current page.north) {\fbox{\parbox{\textwidth}{\centering\small {\color{red}This work has been submitted to the IEEE Conference for possible publication. Copyright may be transferred without notice, after which this version may no longer be accessible.}}}};
    \end{tikzpicture}
\begin{abstract}
The O-RAN control plane is becoming agentic: autonomous AI agents, deployed as rApps by different vendors, independently close control loops over shared radio resources. We demonstrate on a live O-RAN system that this independence is unsafe. Two agents with individually correct objectives, one protecting a latency SLA and one maximizing utilization for energy efficiency, jointly drive recurring opposing excursions of the shared resource partition that neither produces alone. Existing conflict-mitigation mechanisms presume a statically known application population and cannot govern agents whose behavior emerges at run time. We present \sys{}, a lightweight arbitration layer that admits agent actions only when they satisfy feasibility invariants, per-variable dwell times, and a deadband, and we prove the arbitrated system converges to a feasible operating point. Implemented on an OpenAirInterface (OAI) testbed with measured one-way latency and throughput, \sys{} reduces recurring shared-state excursions by more than an order of magnitude (from 8.4 to 0.4~PRB amplitude) and virtually eliminates cross-slice throughput starvation (from 40--55\% to 0.3\%), while leaving the protected slice's own latency compliance unchanged, a trade-off the convergence guarantee makes explicit.
\end{abstract}

\begin{IEEEkeywords}
Agentic RAN, O-RAN, autonomous agents, RAN slicing, conflict mitigation, closed-loop stability, OpenAirInterface
\end{IEEEkeywords}

\section{Introduction}
 
The future AI RAN control plane is agentic. Rather than executing fixed, hand-tuned policies, AI agents observe network telemetry, reason about operator-level intents, and act autonomously on the network without per-decision human oversight. An operator's RAN will host not one such agent but many, developed by different vendors for different purposes: one rApp managing slice quotas to protect latency SLAs, another shrinking energy budgets when capacity goes unused, a third steering traffic to balance load across the network. The O-RAN architecture enables this ecosystem through near-real-time 
xApps on E2 for sub-second control and non-real-time rApps on A1 and 
O1 for longer-horizon policy and orchestration~\cite{polese2023understanding}.
 
What the architecture does not provide is any guarantee about what happens when these agents run \emph{together}. We show that the answer can be dangerous. Two agents with individually reasonable objectives jointly drive repeated opposing excursions of the shared radio resource partition, a pathology that is absent when either agent runs alone and that recurs across independent load cycles. The mechanism is structural, not accidental. An SLA-protection agent enlarges its slice quota to buy latency headroom. An energy agent interprets the resulting spare capacity as waste and shrinks the total resource budget. The smaller budget re-inflates latency, the SLA agent responds, and the loop repeats on every disturbance. Neither agent observes the other, since the O-RAN architecture provides no channel through which independently deployed applications learn of each other's objectives. Their interaction therefore forms a delayed feedback loop between opposing controllers, a well-known precondition for instability that, to our knowledge, has not previously been demonstrated end-to-end on a running O-RAN system.
 
Conflict between RAN applications is a recognized concern, and existing work addresses it for xApps with statically known parameter footprints~\cite{delrio2025pacifista,zolghadr2025conflict,adamczyk2023conflict}. The agentic setting, however, is qualitatively different. An xApp has fixed logic and a known action space; its conflicts with other xApps can be identified offline through static pairwise analysis. An AI agent is designed to be autonomous: it may use online learning so that its behavior evolves with network conditions, it may carry internal memory that shapes future decisions, and its architecture (DRL, LLM-backed, or otherwise) may be entirely opaque to the operator and to other agents. This means that agentic conflicts are not a design-time artifact but a runtime phenomenon. They emerge from the interaction of behaviors that are individually correct, and they cannot be predicted by analyzing any single agent in isolation. No existing mechanism addresses this regime.
 
We present \sys{} (Arbitrated aUtonomous Resource Agents), a lightweight coordination layer that fills this gap. \sys{} requires no knowledge of how any agent decides. Instead, agents submit their intended actions as \emph{proposals} that declare their effect on shared network state. A thin arbiter enforces three conditions before any proposal reaches the network: the resulting state must satisfy system feasibility invariants, the affected variable must not have been modified too recently, and the change must exceed a meaningful threshold. When proposals conflict, a priority ordering resolves them in favor of SLA-critical actions. We prove that this design converges to a feasible operating point and validate it on a live OpenAirInterface/FlexRIC O-RAN testbed using real measured traffic.
 
Our contributions are: (1) the first end-to-end empirical demonstration of multi-agent RAN instability on a running 5GSA O-RAN stack, using real measured one-way delay and throughput rather than model-derived proxies; (2) a formal model of the pathology as a delayed opposing best-response process, with a limit-cycle proof and a convergence guarantee for the arbitrated system; (3) an implementation of per-slice PRB quota enforcement in the OAI NR MAC downlink pre-processor, which is absent in upstream OAI, together with a measurement of the full control-loop latency budget from agent decision to MAC enforcement. Arbitration damps recurring shared-state excursions by an order of magnitude and virtually eliminates the throughput starvation that unarbitrated agents inflict on coexisting slices, at the explicitly acknowledged cost of not improving the protected slice's own latency compliance, a trade-off our convergence guarantee makes honest rather than surprising.
 
The lesson extends beyond two agents and one cell. Whenever autonomous controllers with contradictory objectives share delayed-observation loops over a common resource, coordination must be enforced at the boundary where actions meet the network. \sys{} shows that a provably correct version of that boundary costs almost nothing to build.
 
\section{Related Work}

\textbf{RAN slicing and its enforcement.} Network slicing partitions one physical network into logical networks with distinct service guarantees; its RAN realization requires the MAC scheduler to partition radio resources among slices~\cite{foukas2017orion,ksentini2017slicing}. NVS~\cite{kokku2012nvs} introduced rate- and capacity-based slice scheduling; subsequent systems brought slicing to programmable stacks~\cite{foukas2016flexran,schmidt2021flexric}. On OAI specifically, ORANSlice~\cite{cheng2024oranslice} implements NSSAI-based PRB allocation with min/max/dedicated ratios, and xSlice~\cite{xslice2025} rewrites the downlink scheduler for DRL-driven slice resizing. Our enforcement mechanism follows the \emph{quota} style of this line of work rather than positional PRB ranges; our contribution is not the slicer but what happens when multiple autonomous controllers manipulate it concurrently. We additionally document a practical gap relevant to reproducibility: in current upstream OAI, the E2 slice service model returns emulated indication data and acknowledges control without acting on the MAC, so any multi-agent study on this path requires real enforcement first (Section~\ref{sec:impl}).

\textbf{Conflict mitigation in O-RAN.} Beyond enforcement mechanisms, conflict mitigation is essential. The O-RAN architecture names conflict mitigation as a Near-RT RIC function~\cite{oranwg3}. PACIFISTA~\cite{delrio2025pacifista} profiles xApps offline to derive conflict likelihoods; graph-learning approaches infer conflicts among slicing xApps~\cite{zolghadr2025conflict}; other work classifies direct, indirect, and implicit conflicts and proposes controller-side resolution~\cite{adamczyk2023conflict}. These mechanisms assume a fixed population of applications whose tunable parameters are known in advance. \sys{} targets the complementary regime: agents instantiated dynamically, with opaque decision logic, whose conflicts emerge only at run time. Unlike prior work, our arbiter inspects \emph{declared effects on shared state} rather than application internals, and it carries a convergence guarantee validated on a live stack.

\textbf{Multi-agent control and stability.} That independent controllers with coupled objectives can oscillate is classical: best-response dynamics need not converge without structure such as potential games~\cite{monderer1996potential}, and delayed feedback destabilizes otherwise stable loops. Dwell-time conditions are a standard tool for stabilizing switched systems~\cite{liberzon2003switching}. Our theoretical contribution is not new mathematics but the mapping of these tools onto the O-RAN control plane. We identify the shared-variable structure, the delay sources, and an arbitration rule whose admitted-action sequence provably terminates, together with empirical validation that the pathology and its cure both manifest on a real 5G system under real measured traffic.

\textbf{Agentic and LLM-driven network control.} Recent work explores LLM-based agents for network management and intent translation~\cite{wu2024netllm}. This trend sharpens our motivating assumption: future control applications will be numerous, heterogeneous, and unanalyzable in advance. \sys{} treats the agent as a black box by design; the proposal schema is identical for rule-based and learning-based agents, and our own agents are implemented as simple rule-based controllers precisely so that the observed instability and its repair can be attributed to the multi-agent \emph{interaction structure} rather than to the sophistication of any one decision policy.

\section{System Model}\label{sec:model}
 
\textbf{Resources and slices.} We consider a single cell with $P$ physical resource blocks (PRBs) available per downlink slot. The cell serves $K$ slices; slice $i$ holds a quota $q_i \in [0,P]$, the maximum PRBs its associated users may receive per slot, enforced in the MAC downlink pre-processor. A cap $C \le P$ bounds the total allocatable budget, $\sum_i q_i \le C$; the enforcement mechanism (Section~\ref{sec:impl}) realizes this bound by scaling slice quotas down whenever their sum exceeds $C$, modeling an energy-saving posture in which capacity beyond the cap is not allocated. The shared control state is $x = (q_1,\dots,q_K,C)$. We deliberately use \emph{quota} semantics rather than positional PRB intervals $[p_{\mathrm{low}},p_{\mathrm{high}}]$: positional partitions interact with control-channel and broadcast allocations, while quotas compose with the existing proportional-fair allocator and match the shared variables of our theory.
 
\textbf{Traffic and SLAs.} Slice~1 carries latency-sensitive traffic with target $\ell_1(t) \le L_{\max}$, where $\ell_1(t)$ is measured one-way delay over a sliding window from a UDP probe with a timestamp in the payload; slice~2 carries throughput-oriented traffic with target $r_2(t) \ge R_{\min}$, measured at the receiver. Both SLA metrics are monotone in effective service rate: for fixed offered load, $\ell_1$ is non-increasing in $q_1$ and $r_2$ is non-decreasing in $q_2$, saturating at the offered load. Utilization is $U(t) = \frac{1}{C}\sum_i u_i(t)$ with $u_i(t)$ the granted PRBs per slot averaged over the window; $U$ is defined \emph{relative to the cap}, so tightening $C$ raises measured utilization without serving additional traffic.
 
\textbf{Agents.} In the O-RAN architecture, our agents are realized as \emph{rApps} running in the non-real-time RIC: they receive telemetry through O1/R1 interfaces and, in a production deployment, propose actions via A1 policies enforced through the non-RT-to-near-RT path; our prototype drives the same shared quota table through a file-based mechanism (section \ref{sec:impl}). This placement is appropriate for objectives that operate on multi-second horizons (slice quota and energy cap), while leaving sub-second scheduling to the near-RT RIC. Two autonomous agents observe the system through periodic telemetry with period $T_o$ and act with control period $T_a \ge T_o$. The SLA agent \Asl{} observes $(\ell_1, r_2)$: if $\ell_1 > L_{\max}$ it proposes $q_1 \mathrel{+}= \delta$; if $r_2 < R_{\min}$ it proposes $q_2 \mathrel{+}= \delta$; when both SLAs hold with margin it decays quotas toward a nominal point. The efficiency agent \Aen{} observes $U$: if $U < U_{\mathrm{tgt}}$ it proposes $C \mathrel{-}= \delta_c$; if $U > U_{\mathrm{hi}}$ it proposes $C \mathrel{+}= \delta_c$. Each agent is individually rational: run alone, \Asl{} moves only its own quota and never touches $C$ (validated in \S\ref{sec:eval}). Neither observes the other's objectives, proposals, or existence; the O-RAN architecture provides no such channel between independently deployed applications.

 Our agents are rule-based in this study by design, not by architectural
necessity: the arbiter's admission logic operates only on the declared
proposal $(a, \Delta x, \rho, \mathrm{TTL})$ and is agnostic to whether
that proposal originates from a fixed policy, a fine-tuned DRL model, or
an LLM-backed planner, a property we exploit to attribute the observed
instability to agent \emph{interaction} rather than to the sophistication
of either agent's decision logic (see section \ref{sec:discussion}).

\textbf{Observation delay.} A decision at time $t$ reflects telemetry from the interval $[t - T_o - d,\; t - d]$, where $d$ aggregates measurement, transport, and enforcement latency. We write $\tau_{\mathrm{loop}} = T_o + d$ for the loop delay (in our deployment $T_o = T_a$); \S\ref{sec:eval} reports its measured distribution.
 
\textbf{Arbitration.} In the \emph{direct} regime, agent actions are enforced immediately. In the \emph{arbitrated} regime, each action is a proposal $\pi = (a, \Delta x, \rho, \mathrm{TTL})$ from agent $a$ carrying its declared effect $\Delta x$ on shared state and a priority class $\rho$. The arbiter admits $\pi$ iff (i)~the post-state satisfies the invariants $\sum_i q_i \le C \le P$ and $q_1 \ge q_1^{\mathrm{floor}}$; (ii)~no variable touched by $\Delta x$ was modified within the past $\tau_d$ seconds (dwell); and (iii)~$\lVert \Delta x \rVert \ge \epsilon$ (deadband). Conflicting proposals within an admission window resolve lexicographically by $\rho$: SLA-restoring $\succ$ throughput $\succ$ efficiency. Rejected proposals are returned to the agent with the violated condition; agents may re-propose after re-observing.
\section{Instability and Its Removal}\label{sec:theory}

\subsection{Why the agents fight}

\Asl{} reduces $\ell_1$ by increasing the headroom $C - \sum_i u_i$; \Aen{} eliminates exactly that headroom. Define the best-response maps
\begin{align}
B_{\mathrm{sl}}(C) &= \min\{\, q_1 : \ell_1(q_1; C) \le L_{\max} \,\}, \\
B_{\mathrm{en}}(q_1) &= \max\{\, C : U(q_1, C) \ge U_{\mathrm{tgt}} \,\}.
\end{align}
Under saturating background load, $B_{\mathrm{sl}}$ is decreasing in $C$ (a tighter cap forces a larger quota to hold latency) and $B_{\mathrm{en}}$ is decreasing in $q_1$ (a larger protected quota lowers measured utilization, driving the cap down). Two decreasing maps compose to an increasing map; let $g = \lvert B_{\mathrm{sl}}' \cdot B_{\mathrm{en}}' \rvert$ denote its gain near the fixed point $x^{\ast} = (q_1^{\ast}, C^{\ast})$ where both constraints bind.

\subsection{Uncoordinated dynamics oscillate}

\begin{proposition}[Limit cycle under delay]\label{prop:cycle}
Suppose $g > 1$ in a neighborhood of $x^{\ast}$ and both agents act every $T_a$ on observations delayed by $\tau_{\mathrm{loop}} \ge T_a$, taking steps of size $\delta$ (resp.\ $\delta_c$). Then $x^{\ast}$ is unstable under the direct regime, and trajectories enter a limit cycle whose amplitude is bounded below by $\delta \cdot \lceil \tau_{\mathrm{loop}} / T_a \rceil$.
\end{proposition}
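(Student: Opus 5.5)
I would reduce the direct-regime dynamics to a scalar delayed map on the deviation from $x^{\ast}$ and argue in two parts: local instability from the linearization, and a lower bound on the oscillation amplitude from the discrete, delayed step structure. First, write the joint update in discrete time with index $k$ (one step per $T_a$) and lag $m = \lceil \tau_{\mathrm{loop}}/T_a \rceil \ge 1$. Each agent acts on the sign of its SLA or utilization error observed $m$ steps earlier. Concretely, $q_1[k+1] = q_1[k] + \delta\,\sigma_{\mathrm{sl}}[k-m]$ and $C[k+1] = C[k] - \delta_c\,\sigma_{\mathrm{en}}[k-m]$, where $\sigma_{\mathrm{sl}},\sigma_{\mathrm{en}}\in\{-1,0,+1\}$ encode the threshold rules of \Asl{} and \Aen{} from Section~\ref{sec:model}. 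Near $x^{\ast}$ these signs are determined by the position of $(q_1,C)$ relative to the curves $q_1=B_{\mathrm{sl}}(C)$ and $C=B_{\mathrm{en}}(q_1)$.

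\textbf{Instability.} I would first treat a smoothed (proportional) version of the rules, with gains $k_{\mathrm{sl}},k_{\mathrm{en}}$, and linearize around $x^{\ast}$. Eliminating $C$ gives a scalar recursion $e[k+1]=e[k]-\kappa\, e[k-2m]$ in the deviation $e=q_1-q_1^{\ast}$. By the composition argument preceding the proposition, the effective loop gain $\kappa$ is proportional to $g$. The characteristic polynomial $z^{2m+1}-z^{2m}+\kappa$ has a root outside the unit circle whenever $\kappa$ exceeds a threshold, and that threshold is at most $1$ for $m\ge1$. So $g>1$ forces instability of $x^{\ast}$; I would verify this by Jury's test or by evaluating at $z=-1$ for the relevant parity.

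\textbf{Amplitude.} For the actual sign-based rules, the key observation is that a fixed-size step controller cannot settle at $x^{\ast}$: it must keep stepping, so $x^{\ast}$ is not an equilibrium of the switched dynamics. Suppose $q_1$ crosses $B_{\mathrm{sl}}(C)$ at step $k_0$. The agent keeps pushing in the same direction for $m$ further steps, because its observations still reflect the pre-crossing state. That accumulates an overshoot of at least $m\delta$ beyond the switching surface. The gain $g>1$ then prevents the opposing cap response from shrinking this overshoot on the return half-cycle: the induced displacement of $B_{\mathrm{sl}}(C)$ is at least as large as the $q_1$ excursion. Hence each half-cycle reproduces an excursion of at least $\delta m$. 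The trajectory stays in a bounded set, since $q_1\in[0,P]$ and $C\le P$, and it is non-convergent, which gives a recurrent orbit. On the finite lattice of reachable states, $x^{\ast}+\delta\mathbb{Z}\times\delta_c\mathbb{Z}$, recurrence becomes eventual periodicity, so the orbit is a limit cycle of amplitude at least $\delta\lceil\tau_{\mathrm{loop}}/T_a\rceil$.

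\textbf{Main obstacle.} The hard step is the overshoot-propagation argument. It must show that the delayed cross-coupling with $g>1$ does not let the excursion decay between half-cycles. The difficulty is that the two agents' switching times interleave and the nonlinear thresholds break exact linear analysis. I would first try to bound the excursion of a single half-cycle from below via a monotone comparison with the worst-case interleaving, using the monotonicity of $\ell_1$ in $q_1$ and the decreasing best responses. If that comparison is too delicate, I would fall back to stating the bound for the symmetric case of synchronized actions.
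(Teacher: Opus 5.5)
Your limit-cycle half follows the paper's own sketch. Stale observations give an $m\delta$ overshoot, $g>1$ keeps it from decaying, floor and ceiling bound the state, a deterministic map on a finite lattice is eventually periodic, and instability of $x^{\ast}$ rules out period one. The linearization is your addition, and it has the wrong structure. Eliminating $C$ from two first-order delayed updates does not reduce to the first-order recursion $e[k+1]=e[k]-\kappa e[k-2m]$, because two accumulators remain coupled. The sign is also wrong: both best-response slopes are negative, so the loop product is $+g$ (the composite is increasing, as the paper notes). For the proportional rule $\Delta q_1[k]=k_{\mathrm{sl}}\bigl(B_{\mathrm{sl}}(C[k-m])-q_1[k-m]\bigr)$, $\Delta C[k]=k_{\mathrm{en}}\bigl(B_{\mathrm{en}}(q_1[k-m])-C[k-m]\bigr)$, the characteristic equation is
\[
\bigl(w+k_{\mathrm{sl}}\bigr)\bigl(w+k_{\mathrm{en}}\bigr)=k_{\mathrm{sl}}k_{\mathrm{en}}\,g,\qquad w=(z-1)z^{m}.
\]
At $w=0$ the left side equals $k_{\mathrm{sl}}k_{\mathrm{en}}<k_{\mathrm{sl}}k_{\mathrm{en}}g$. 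So $g>1$ gives a real root $w_0>0$, hence a real root $z_0>1$ of $(z-1)z^m=w_0$, for every $m$ and every positive gain. For the smoothed model, instability is therefore true and easier than your threshold argument. That argument would in any case need the gains inside $\kappa$ to be large, not merely $g>1$. But what $g>1$ forces is a real, non-alternating mode in which $q_1$ and $C$ move monotonically in opposite directions. Any oscillatory root depends on the gains and $m$, not on $g>1$.

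This is exactly what breaks your amplitude step. Since $q_1\mapsto B_{\mathrm{sl}}(B_{\mathrm{en}}(q_1))$ is increasing with slope $g>1$, displacing $q_1$ by $e$ moves its own target by about $ge$ in the same direction. The cobweb is a staircase, not a spiral, so there is no ``return half-cycle'' on which an overshoot could be reproduced. Near $x^{\ast}$ the state drifts toward the edge of the bounded set. There, eventual periodicity may end in a period-one orbit; note that it must be taken on the history-augmented state $(x[k],\dots,x[k-m])$, since that is what makes the delayed map deterministic. Instability of $x^{\ast}$ excludes only $x^{\ast}$ itself. Two kinds of states are fixed points of the direct dynamics:
(i) lattice states in the joint idle set of the threshold rules ($\ell_1\le L_{\max}$ without the decay margin and $U_{\mathrm{tgt}}\le U\le U_{\mathrm{hi}}$);
(ii) states where \Asl{} is clipped at $q_1=P$ while \Aen{} idles.

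Your claim that a fixed-step controller ``must keep stepping'' fails for the same reason. Both rules fire only on strict violations, so $x^{\ast}$ is itself an equilibrium of the switched dynamics whenever it lies on the lattice.

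The missing ingredient is a reversal mechanism forcing \Asl{} to activate infinitely often on the eventual orbit, e.g.\ \Aen{}'s $U>U_{\mathrm{hi}}$ branch or \Asl{}'s decay rule firing near the edge of the set. You also need an argument that the trajectory never lands on one of these fixed points. Both require hypotheses beyond the local condition $g>1$. Given an activation, the bound $\delta\lceil\tau_{\mathrm{loop}}/T_a\rceil$ comes from \Asl{}'s own delayed loop, not from $g$: it keeps stepping on stale violations through the $m$ periods before its first step becomes visible, unless \Aen{} relieves the violation meanwhile. The paper's sketch asserts the same two steps (``amplifies the overshoot each round trip'', ``by instability of $x^{\ast}$ the period exceeds one'') and defers the full argument. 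Following it will not close your main obstacle.
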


\begin{IEEEproof}[Proof sketch]
With stale observations, an agent continues stepping in the same direction for the $\lceil \tau_{\mathrm{loop}}/T_a \rceil$ periods during which its own most recent actions are not yet visible in telemetry, overshooting its best response by at least $\delta \lceil \tau_{\mathrm{loop}}/T_a \rceil$. Composition with gain $g>1$ amplifies the overshoot each round trip, so no trajectory remains within any sufficiently small neighborhood of $x^{\ast}$; the quota floor and cap ceiling confine trajectories to a bounded invariant set. On the quantized state lattice induced by step sizes $(\delta, \delta_c)$, the dynamics within this set are a deterministic map on finitely many states, hence eventually periodic, and by instability of $x^{\ast}$ the period exceeds one: a limit cycle. A complete argument is deferred to an extended version.
\end{IEEEproof}

The proposition predicts that the excursion should recur under repeated disturbance rather than settle after one occurrence, and that it should vanish when either agent is disabled. Our two-cycle experiment is designed to test the first prediction directly: under \textsc{Direct}, the shared cap moves again during the second, independent load cycle in a majority of repetitions (3 of 5, amplitudes 0, 2, 0, 2, 6 PRBs; Section~\ref{sec:eval}), and the amplitude is zero whenever either agent runs alone (Table~\ref{tab:amp}). Systematic step-size and delay sweeps are left to future work.

\subsection{Arbitration restores convergence}

\begin{proposition}[Arbitrated convergence]\label{prop:conv}
Under arbitration with dwell time $\tau_d > \tau_{\mathrm{loop}}$ and deadband $0 < \epsilon \le \min(\delta, \delta_c)$, so that agent steps remain admissible while zero-effect proposals are excluded, every admitted action is computed from an observation reflecting all previously admitted actions on the variables it touches. The admitted-action sequence is then a sequential best-response process; with the lexicographic priority order over a finite quantized state space, it satisfies the finite improvement property and terminates in finitely many admitted actions at a state within $\epsilon$ of a feasible point satisfying all invariants, with $q_1 \ge q_1^{\mathrm{floor}}$ maintained throughout by admission control.
\end{proposition}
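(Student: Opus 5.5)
The plan has three steps: first the freshness property, then termination through a potential argument on the quantized state lattice, and finally feasibility of the terminal state.

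\textbf{Freshness.} Fix a shared variable $v$ and two consecutive admitted actions on it, at times $t_1<t_2$. By the dwell condition~(ii), $t_2-t_1\ge\tau_d>\tau_{\mathrm{loop}}$. The action at $t_2$ is computed from telemetry on $[t_2-T_o-d,\,t_2-d]$. Its left endpoint is $t_2-\tau_{\mathrm{loop}}>t_1$, so the whole window postdates the earlier admission, and that earlier effect has already propagated through enforcement. By induction over the admitted sequence, every admitted action on $v$ sees all earlier admitted changes to $v$. The proof will treat the delay $d$ as deterministically bounded by $\tau_{\mathrm{loop}}-T_o$; this is the point where the measured latency distribution would have to be argued to have bounded support. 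With fresh observations, the stale-overshoot mechanism behind Proposition~\ref{prop:cycle} disappears: each admitted step is a genuine best-response step toward $B_{\mathrm{sl}}$ or $B_{\mathrm{en}}$ evaluated at the current state.

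\textbf{Finite improvement.} The state is confined to the lattice generated by $(\delta,\delta_c)$ inside the box cut out by the invariants $\sum_i q_i\le C\le P$ and $q_1\ge q_1^{\mathrm{floor}}$, so it is finite. The deadband~(iii), with $\epsilon\le\min(\delta,\delta_c)$, excludes null moves while still admitting every real step. I then need a potential that strictly decreases in the lexicographic order along admitted actions. The candidate is
\[
\Phi(x)=\bigl(\phi_{\mathrm{sl}}(x),\,\phi_{\mathrm{tp}}(x),\,\phi_{\mathrm{en}}(x)\bigr),
\]
where each coordinate is the violation of the corresponding class's objective: $[\ell_1-L_{\max}]_+$, then $[R_{\min}-r_2]_+$, then $|U-U_{\mathrm{tgt}}|$ discretized. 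I would then check four things:
\begin{itemize}
\item An SLA-restoring step lowers $\phi_{\mathrm{sl}}$, by monotonicity of $\ell_1$ in $q_1$.
\item A throughput step lowers $\phi_{\mathrm{tp}}$ without raising $\phi_{\mathrm{sl}}$.
\item An efficiency step that would raise $\phi_{\mathrm{sl}}$ is dominated under priority ordering, or is blocked by the invariant and floor conditions.
\item An efficiency step that survives lowers $\phi_{\mathrm{en}}$.
\end{itemize}

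\textbf{Main obstacle.} The hard step is the third check: an efficiency step admitted when no competing SLA proposal is present, but which later re-triggers an SLA violation. Lexicographic priority only arbitrates proposals that arrive in the same admission window. My first attempt is to argue that, by freshness, the efficiency agent proposes a cut only when $U<U_{\mathrm{tgt}}$ at the true current state. Since $B_{\mathrm{en}}$ is the maximal cap meeting the utilization target, the cut lands at or above $B_{\mathrm{en}}(q_1)$. I would then need this to be consistent with $q_1=B_{\mathrm{sl}}(C)$, so that every move is monotone along the increasing composite map. A monotone sequence on a finite lattice terminates, which gives finite improvement even if $g>1$. If this fails, the fallback is to absorb the issue into the dwell timer, arguing that each variable changes direction only finitely often, and to count direction reversals.

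\textbf{Termination and feasibility.} Once termination holds, the final state has no admissible proposal. Either both agents' conditions hold, which places the state at $x^\ast$ up to one lattice step, hence within $\epsilon$ of it. Or every remaining proposal violates~(i) or~(iii), which places the state at a boundary point that is still feasible. In both cases the invariants, including the floor $q_1\ge q_1^{\mathrm{floor}}$, hold at every step, because condition~(i) is checked on each post-state before admission.
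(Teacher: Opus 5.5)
You have the freshness step and the overall shape right: a lexicographic potential on a finite lattice, with invariants checked on every post-state, as in the paper's sketch. But you leave open the step that carries the argument, and neither of your two routes closes it.

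The paper settles your third check with the feasibility invariant, not with priority ordering. An admitted action cannot worsen a higher-priority satisfied objective, because any proposal that would do so violates an invariant and is rejected. What makes this true in the paper's model is the enforcement semantics: $C$ acts on delivered PRBs only through the pro-rata scale-down, and that fires only when $\sum_i q_i > C$. Any cut the arbiter admits keeps $\sum_i q_i \le C \le P$, so no slice's effective quota changes and $\ell_1$ and $r_2$ are unaffected. A cut that could actually hurt slice~1 must push $C$ below $\sum_i q_i$ and is rejected; this is consistent with the 98\% invariant-rejection share reported in the evaluation. The arbiter checks the invariant against the true current shared state at admission time. The protection therefore does not depend on what the proposing agent observed or on admission windows. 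That covers exactly the cross-window case you feared priority could not reach.

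Your own attempts do not substitute for this.
\begin{itemize}
\item The monotone-composition route needs \Aen{} to see $U$ at the true current state, but your freshness lemma is per-variable. $U$ depends on $q_1$, which \Aen{} does not touch, so a $q_1$ change admitted shortly before \Aen{}'s proposal can be invisible to it.
\item That route also needs the unproved alignment $q_1 = B_{\mathrm{sl}}(C)$.
\item It ignores that \Aen{} raises $C$ when $U > U_{\mathrm{hi}}$ and that \Asl{} decays quotas when both SLAs hold with margin. Moves are therefore not confined to one direction of the composite map.
\item The fallback fails outright. Dwell only keeps consecutive changes of a variable at least $\tau_d$ apart; it says nothing about how often the variable reverses direction, since a cycle with period longer than $\tau_d$ is fully compatible with it.
\end{itemize}

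Two smaller repairs are also needed. First, $|U-U_{\mathrm{tgt}}|$ is the wrong efficiency coordinate: an admitted cut that carries $U$ past $U_{\mathrm{tgt}}$ can increase it. Use the band violation $[U_{\mathrm{tgt}}-U]_+ + [U-U_{\mathrm{hi}}]_+$ instead. Second, in your terminal case, the set where both agents are idle is a region, not a neighbourhood of $x^\ast$. Also, ``one lattice step, hence within $\epsilon$'' runs the inequality backwards, because $\epsilon \le \min(\delta,\delta_c)$ makes a lattice step at least $\epsilon$. You need neither claim: every reached state is feasible because admission checks the post-state, and that is all the paper uses for the terminal claim.
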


\begin{IEEEproof}[Proof sketch]
Dwell $\tau_d > \tau_{\mathrm{loop}}$ guarantees that between consecutive admitted modifications of any variable, at least one full telemetry cycle elapses, so the stale-observation overshoot driving Prop.~\ref{prop:cycle} cannot occur; actions on each variable are serialized against fresh state. Order admitted actions by the lexicographic priority vector: each admitted non-redundant action (deadband excludes redundant ones) strictly improves the highest-priority violated objective without worsening any higher-priority satisfied one, since such a proposal would violate an invariant and be rejected. The priority vector over the finite lattice therefore admits no infinite strictly improving sequence, giving termination; invariants hold at every step because admission checks the post-state.
\end{IEEEproof}

Prop.~\ref{prop:conv} is deliberately modest: it does not claim optimality of the terminal point, only feasibility, SLA-floor safety at every step, and the absence of the recurring excursion. This matches the arbiter's design philosophy of constraining \emph{what} agents do to shared state while remaining agnostic to \emph{how well} they decide, and it correctly predicts the trade-off we measure in Section~\ref{sec:eval}: the arbiter damps the shared state and protects the coexisting slice, but it does not, and does not claim to, improve the protected slice's own SLA compliance beyond what the pre-existing allocation already delivers.

\section{\sys{} Design and Implementation}\label{sec:impl}

\subsection{Architecture}

Fig.~\ref{fig:arch} shows the system. Agents run in the non-real-time tier of our orchestration layer\footnote{Platform name withheld for double-blind review.} above the RIC; the arbiter interposes on the single path by which agent actions reach the network. Enforcement reaches the OAI gNB, where a quota module in the NR MAC downlink pre-processor applies the partition. Telemetry returns through the E2 MAC service model into a time-series store that both agents and our measurement harness read. For actuator validation we drive the quota table from a file that the pre-processor reloads periodically (isolating MAC enforcement from the E2 path); agent experiments then use the same table, written by the arbiter through the FlexRIC slice control API once the E2 callbacks are bound to it. All events, comprising observations, proposals, admissions, rejections, and enforcements, are logged to a single timeline keyed by one host clock, which the RF-simulated deployment makes exact.

\begin{figure*}[t]
\centering
\includegraphics[width=1\linewidth]{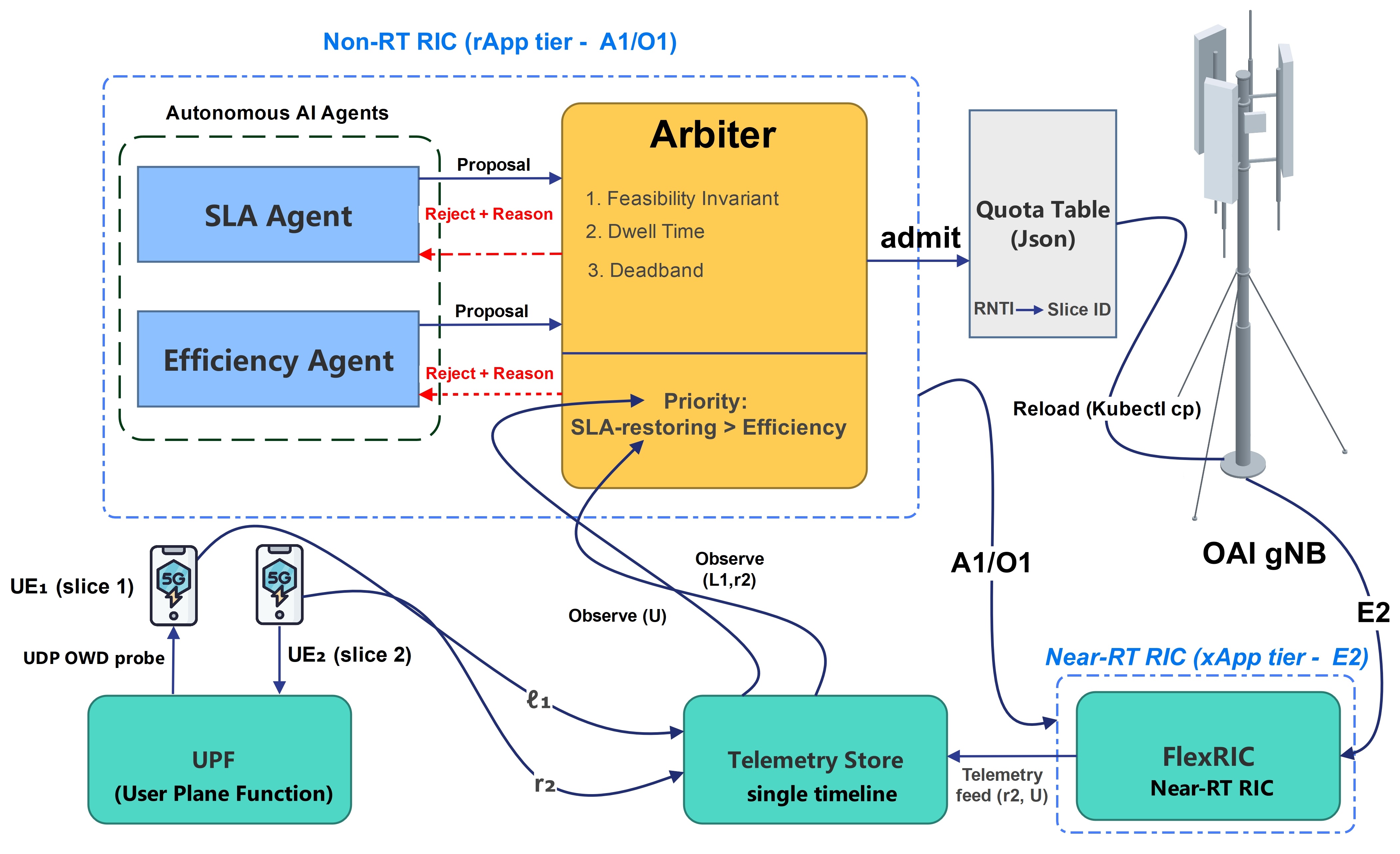}
\caption{\sys{} architecture. Autonomous AI agents ($A_\mathrm{sl}$, $A_\mathrm{en}$) in the Non-RT RIC submit proposals to a central arbiter, which enforces three conditions (feasibility invariant, dwell time, deadband) with SLA-restoring actions taking priority. Admitted proposals update a shared quota table reloaded by the OAI gNB; the NR MAC downlink pre-processor enforces per-slice PRB quotas. One-way latency ($\ell_1$) is measured via a dedicated UDP probe to UE$_1$; slice-2 throughput ($r_2$) is read at UE$_2$; MAC statistics return via E2 to FlexRIC and feed a shared telemetry store read by both agents.}
\label{fig:arch}
\end{figure*}

\subsection{PRB quota enforcement in the OAI NR MAC}

We report a finding of independent interest. In current upstream OAI, the E2 slice service model is an emulator: indication messages are filled with random data and control messages are acknowledged without touching the MAC, and upstream documentation states the model is supported only against emulated agents. Studies that exercise this path without verifying enforcement measure nothing. We therefore implement enforcement directly: a quota table mapping slice identifier to maximum PRBs per slot and UE (by RNTI) to slice, consulted by the downlink pre-processor, which caps each UE's allocation at its slice's remaining budget for the slot; the cap $C$ is enforced by scaling every slice's quota down whenever their sum exceeds $C$, so $C$ has a direct physical effect on delivered PRBs rather than serving only as arbiter-internal bookkeeping. We chose quota semantics over positional PRB ranges deliberately: positional partitions interact with control-channel and broadcast allocations, while quotas compose cleanly with the existing proportional-fair allocator and match the semantics of prior OAI slicers~\cite{cheng2024oranslice}. After the MAC path is validated, the slice service model's read and write callbacks are rewired to the same table, replacing the emulator, so standard xApps observe and control the real partition. The patch will be released with the paper.

\subsection{The arbiter}

The arbiter is a small service that maintains only the current shared state, per-variable last-modification timestamps, and the invariant set. Proposals arrive over a local API; admission executes the three checks of Section~\ref{sec:model}; admitted actions update the quota table and the timeline log. Two properties matter for deployment. First, the arbiter needs no model of agent internals, since rule-based and learning-based agents are indistinguishable behind the proposal schema. Second, rejection is informative: the violated condition is returned, so a well-behaved agent re-observes rather than blindly re-proposes.

\subsection{Measuring what the agents observe}\label{sec:owd}

An earlier prototype drove \Asl{} from a latency \emph{proxy} computed from the current quota and offered-load fill ratio rather than from a measured quantity. That proxy is adequate for exercising the arbiter's logic in isolation, but it cannot support any empirical claim about SLA compliance, since its value is deterministic given the shared state and therefore identical across repeated runs. All results reported in Section~\ref{sec:eval} instead use a dedicated UDP one-way-delay probe: a lightweight sender on the core's user-plane function embeds a timestamp in each payload at a fixed rate, and a receiver on the SLA-protected UE computes one-way delay against the shared host clock of the RF-simulated deployment, which makes the measurement exact rather than an RTT approximation. Throughput for both slices is read from receiver-side counters on the UEs. We verified the bidirectional reachability of this path with an independent echo test before every experimental run (Section~\ref{sec:actuator-gate}).

\section{Evaluation}\label{sec:eval}

\subsection{Setup}

Experiments run on a containerized 5G system: OAI core, one OAI gNB in RF-simulator mode (band n78, 20\,MHz, $P{=}51$ PRBs at 30\,kHz SCS), two OAI UEs, and FlexRIC as near-RT RIC, orchestrated on Kubernetes on a single server (AMD EPYC 9354, $2\times32$ cores / 64 threads, 377\,GiB RAM); gNB and UE processes are core-pinned (gNB to cores 0--7, UE$_1$ to 8--11, UE$_2$ to 12--15) and per-phase CPU headroom is reported in the appendix to exclude host scheduling as a confound. UE$_1$ (slice~1) is the SLA-protected slice and its one-way delay $\ell_1$ is measured by the probe of Section~\ref{sec:owd}. UE$_2$ (slice~2) receives UDP downlink measured at the receiver. Throughput experiments use UDP rather than TCP so that MAC quota effects are not confounded by congestion control or GTP path-MTU artifacts on the PDU session. SLA targets are $L_{\max}{=}18$\,ms and $R_{\min}{=}8$\,Mbps; agent parameters $T_a{=}5$\,s, $\delta{=}2$ PRBs (slice) / $\delta_c{=}2$ PRBs (energy cap), $U_{\mathrm{tgt}}{=}0.72$, $U_{\mathrm{hi}}{=}0.92$; arbiter parameters $\tau_d{=}8$\,s (chosen, per Prop.~\ref{prop:conv}, above a measured and estimated $\tau_{\mathrm{loop}} \approx 7$\,s dominated by enforcement latency, see Section~\ref{sec:latency-budget}), $\epsilon{=}1$ PRB, with initial state $q_1{=}22$, $q_2{=}18$, $C{=}48$ (of $P{=}51$). Each experiment applies a traffic script with two independent load cycles: 60\,s of steady state at 15\,Mbps offered per UE, then twice in succession a 90\,s step increasing UE$_1$'s offered load to 22\,Mbps (UE$_2$ unchanged) followed by a 60\,s step-down back to 15\,Mbps, for a total run length of 360\,s. The script is repeated 5 times per regime (20 runs total), with per-UE offered loads chosen within the region validated in Section~\ref{sec:actuator-gate} so that the agent dynamics are not confounded by host-side dual-UE compute contention.

We compare four regimes: \textsc{Static} (fixed quotas, no agents), \textsc{Single} (only \Asl{} active), \textsc{Direct} (both agents, no arbiter), and \textsc{\sys{}} (both agents, arbitrated).

\subsection{Actuator validation}\label{sec:actuator-gate}

Before any agent experiment we validate that the quota table actually moves delivered throughput, rather than trusting the emulated E2 acknowledgment (Section~\ref{sec:impl}), and that the UE-facing user-plane path is reachable in both directions so that no measurement silently reads a dead path. Control for the quota tests in this subsection is file-driven directly into the pre-processor, isolating MAC enforcement from the E2 path, so a positive result cannot be attributed to the stub control path.

\textbf{Single-UE sweep (primary actuator evidence).} Under saturating UDP downlink (offered 80\,Mbps) to one UE, raising its quota from 13 to 38 PRBs (of $P{=}51$) raises measured receiver goodput from 17.0 to 43.1\,Mbps, a $2.54\times$ increase consistent with the quota ratio, while the MAC service model's per-slot allocation counter tracks the configured quota exactly (a hard cap: quota$=13 \Rightarrow$ allocated $\approx 13$ PRBs/slot). This isolates the actuator from any dual-UE scheduling interaction and is our primary evidence that the quota mechanism, and not an artifact of aggregate offered load, controls delivered throughput.

\textbf{Two-UE fairness and directionality.} At moderate offered load, two UEs in distinct slices split throughput fairly under equal quotas (23.1/23.1\,Mbps at 22\,Mbps offered per UE), and an asymmetric 13:38 quota split at 28\,Mbps offered shifts throughput directionally (19.1/29.4\,Mbps). At offered loads approaching the cell's dual-UE ceiling (${\gtrsim}$28--40\,Mbps per UE, and under some asymmetric splits at lower offers), however, one UE can starve under RFsim's shared software-PHY compute load rather than under the quota mechanism itself. During affected runs we observed elevated, though sub-saturating (64--71\%), CPU utilization on the L1 transmit and radio-unit threads, and the starvation persists under explicit core pinning. This confirms a genuine dual-softmodem compute limitation of the RF-simulated platform rather than a scheduling oversight. The MAC-level per-slot quota assignment continued to track the configured value exactly throughout. We therefore run the agent experiments (Section~\ref{sec:eval-agents}) at offered loads within the two-UE fair/directional region validated here.

\textbf{Bidirectional path check.} Independently of the quota actuator, we verified before every run that the UE-facing PDU session carries traffic in both directions using a dedicated echo probe, distinct from the one-way measurements used for results. This check caught a routing artifact in the RF-simulated core in which uplink traffic addressed to the PDU gateway was silently hairpinned rather than delivered, while downlink remained unaffected; the fix is orthogonal to the MAC quota patch and is reported for reproducibility.

\subsection{Uncoordinated agents destabilize what each protects alone}\label{sec:eval-agents}

\begin{figure}[t]
\centering
\includegraphics[width=1\linewidth]{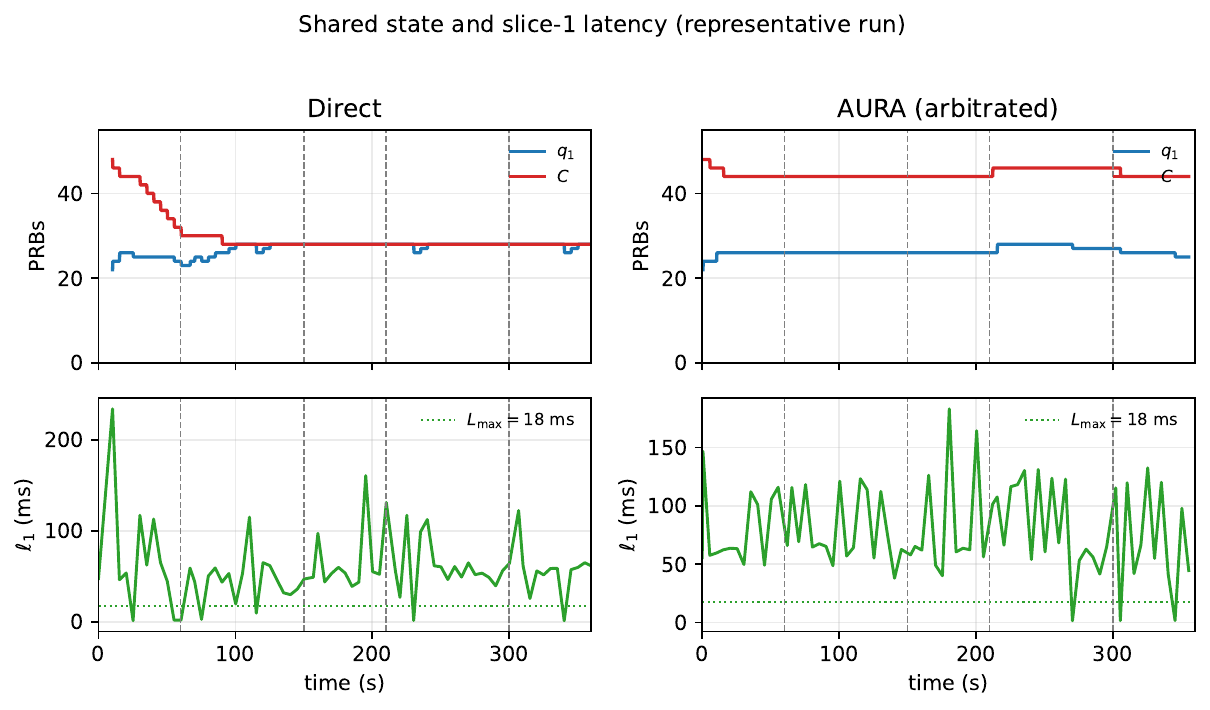}
\caption{Shared-state trajectories and measured slice-1 one-way delay under \textsc{Direct} (left) and \textsc{\sys{}} (right) for one representative run spanning two load cycles (dashed lines mark the four phase transitions at 60, 150, 210, and 300\,s). Measured $\ell_1$ is noisy under both regimes and frequently exceeds $L_{\max}$, reflecting real queueing jitter on the RF-simulated platform; the regime-distinguishing signal is in the top panels, where $C$ continues to move under \textsc{Direct} in the second cycle while \textsc{\sys{}} keeps $q_1$ essentially frozen and damps $C$.}
\label{fig:money}
\end{figure}

Table~\ref{tab:amp} summarizes shared-state amplitude across the four regimes (mean over 5 repetitions). We report amplitude three ways: over the full 360\,s run, after excluding an initial 40\,s settling period common to every regime (the shared initial condition), and restricted to the second load cycle ($t \ge 210$\,s) to test whether the pathology recurs rather than appearing only as a one-time transient. \textsc{Static} shows, by construction, zero movement throughout. In \textsc{Single}, only \Asl{} is active: $q_1$ moves substantially (settled amplitude 19.0 PRBs) while $C$ is untouched under every measure, and this holds throughout the full two-cycle run. This isolates the pathology from either agent's individual behavior: \Asl{} alone does not destabilize the shared cap. In \textsc{Direct}, with both agents active and no arbiter, the settled $C$ amplitude is 8.4 PRBs, and critically this is not a single early transient: restricted to the second load cycle, $C$ still moves by 2.0 PRBs on average, and in 3 of the 5 repetitions it moves by at least 2 PRBs during that second cycle alone (per-repetition values 0, 2, 0, 2, 6). The comparison between \textsc{Single} and \textsc{Direct} uses identical \Asl{} logic and identical traffic, with the only difference being whether \Aen{} is also active. This comparison, together with the recurrence across load cycles, is our central empirical claim: the destabilizing behavior is a property of the \emph{interaction} between agents and recurs under repeated disturbance, exactly as Prop.~\ref{prop:cycle} predicts for opposing best-response maps sharing state under delay, rather than a one-off artifact of the initial condition.

\begin{table}[t]
\caption{Shared-state amplitude (PRBs, mean over 5 reps; lower is better), computed over the full run, after a 40\,s settling period, and restricted to the second load cycle ($t\ge210$\,s).}
\label{tab:amp}
\centering
\begin{tabular}{lccccc}
\toprule
Regime & \multicolumn{2}{c}{$q_1$ amp} & \multicolumn{3}{c}{$C$ amp} \\
 & full & settled & full & settled & cycle 2 \\
\midrule
\textsc{Static} & 0.0  & 0.0  & 0.0  & 0.0 & 0.0 \\
\textsc{Single} & 26.0 & 19.0 & 0.0  & 0.0 & 0.0 \\
\textsc{Direct} & 11.8 & 10.4 & 18.8 & 8.4 & 2.0 \\
\textsc{\sys{}} & 4.4  & 1.4  & 4.0  & 0.4 & 0.4 \\
\bottomrule
\end{tabular}
\end{table}
\begin{table}[t]
\caption{SLA violation rate (\% of measurement windows; lower is better) per slice and regime, mean over 5 runs, using measured one-way delay and measured receiver throughput. A window violates slice~1 if $\ell_1 > L_{\max}=18$\,ms and slice~2 if $r_2 < R_{\min}=8$\,Mbps.}
\label{tab:sla}
\centering
\begin{tabular}{lcccc}
\toprule
 & \textsc{Static} & \textsc{Single} & \textsc{Direct} & \textsc{\sys{}} \\
\midrule
Slice 1, all phases   & 82.8 & 81.7 & 84.5 & 92.9 \\
Slice 1, step\_up only & 82.2 & 80.6 & 85.6 & 92.2 \\
Slice 2, all phases   & 0.6  & 55.0 & 40.2 & 0.3 \\
Slice 2, step\_up only & 0.0  & 54.4 & 46.7 & 0.6 \\
\bottomrule
\end{tabular}
\end{table}

Table~\ref{tab:sla} reports a result we did not expect when we replaced the latency proxy with a real probe, and we report it plainly rather than reshaping the narrative around it. Slice-1 violation rates are high, 80 to 93\%, across every regime, because the RF-simulated platform's baseline one-way delay is itself noisy and frequently exceeds the 18\,ms target regardless of which agents are running (Fig.~\ref{fig:money}); this is a property of the shared-host software radio, not of the interaction pathology, and the absolute level should not be over-read. The signal that does distinguish the regimes cleanly is slice-2 throughput protection: \textsc{Single} and \textsc{Direct} let $q_1$ growth or $C$ shrinkage squeeze the coexisting slice, driving 40 to 55\% throughput violations, while \textsc{\sys{}} holds slice-2 violations at 0.3\%, statistically indistinguishable from \textsc{Static}'s 0.6\%. \textsc{\sys{}} does not lower slice-1's own violation rate; it is numerically the highest of the four regimes (92.9\%, std 3.9\%, versus \textsc{Static}'s 82.8\%, std 5.9\%). Notably, \sys{}'s mean settled $q_1 \approx 25.5$ exceeds \textsc{Static}'s fixed 22, so the higher violation rate cannot be attributed to lower quota alone. To probe the mechanism, we ran a static control at \sys{}'s settled operating point $(q_1,q_2,C)=(26,18,44)$ with no agents: the mean slice-1 violation was 87.0\%~(±2.9), inside \textsc{Static}'s $\pm1\sigma$ band and outside \sys{}'s. This result rules out the cap-scaling mechanism (at this operating point $\sum_i q_i = C$, so the MAC pro-rata shrink path never fires) as the primary driver, but leaves a residual gap of roughly 6 percentage points between the static control and the live \sys{} runs that we attribute to live-dynamics effects such as periodic enforcement churn; with $n{=}5$ repetitions the two distributions overlap in their tails, and we leave precise attribution to future work. We ground the ``automation can jointly harm'' claim in the recurring $C$-excursion of Table~\ref{tab:amp} and in the slice-2 starvation of Table~\ref{tab:sla}, both of which are absent or negligible under \textsc{Static} and under \sys{}.

\subsection{Arbitration removes the cross-slice pathology, at an explicit cost}

Under \textsc{\sys{}} with identical agents and traffic, $q_1$'s settled amplitude falls from 10.4 (\textsc{Direct}) to 1.4, and $C$'s settled amplitude falls from 8.4 to 0.4, with the second-cycle recurrence essentially eliminated (2.0 down to 0.4; Table~\ref{tab:amp}). The arbiter is doing real work, not decoration: pooled over the 5 arbitrated repetitions it rejects 548 proposals, of which 537 (98\%) are rejected by the feasibility invariant $\sum_i q_i \le C$ and 11 (2\%) by the dwell-time check, split 6 on $q_1$ and 5 on $C$ (Fig.~\ref{fig:conflicts}). The deadband check fires zero times in this configuration by construction, since the agents' fixed step size ($\delta=2$) never falls below the configured deadband ($\epsilon=1$); this is an artifact of our parameter choice rather than evidence that the mechanism is inactive, and we note it rather than obscure it. The invariant check dominating the rejection mix is itself informative: it means the efficiency agent's attempts to shrink $C$ below what \Asl{}'s current quota requires are the primary and persistent source of conflict, exactly the structural opposition modeled in Section~\ref{sec:theory}, and the arbiter resolves essentially every instance of it in favor of feasibility.

The honest cost of this configuration is visible directly in Table~\ref{tab:sla} and in Fig.~\ref{fig:money} (right): \sys{} does not reduce slice-1's own latency-violation rate, and numerically it is the worst of the four regimes on that single metric. The arbiter does admit \Asl{}'s quota-growth proposals when capacity allows (settled $q_1 \approx 25.5$ under \sys{}, above \textsc{Static}'s fixed 22), yet the violation rate remains higher; a static control run at \sys{}'s settled operating point narrows but does not close the gap, leaving a live-dynamics contribution as the remaining candidate. The remedy is a proposal that co-ordinates growth of both $q_1$ and $C$ in a single atomic action; our single-variable proposal schema cannot express such coupling, and we identify composite proposals as the natural next extension of the arbiter interface. The trade the current design makes is explicit and is exactly what Prop.~\ref{prop:conv} promises: feasibility, safety-floor compliance, and bounded shared state at every step, not an improvement to every agent's individual objective. In this deployment, that trade buys near-complete protection of the coexisting slice (Table~\ref{tab:sla}) and an order-of-magnitude reduction in recurring shared-state movement (Table~\ref{tab:amp}), in exchange for leaving one already-degraded metric unimproved.

\begin{figure}[t]
\centering
\includegraphics[width=1\linewidth]{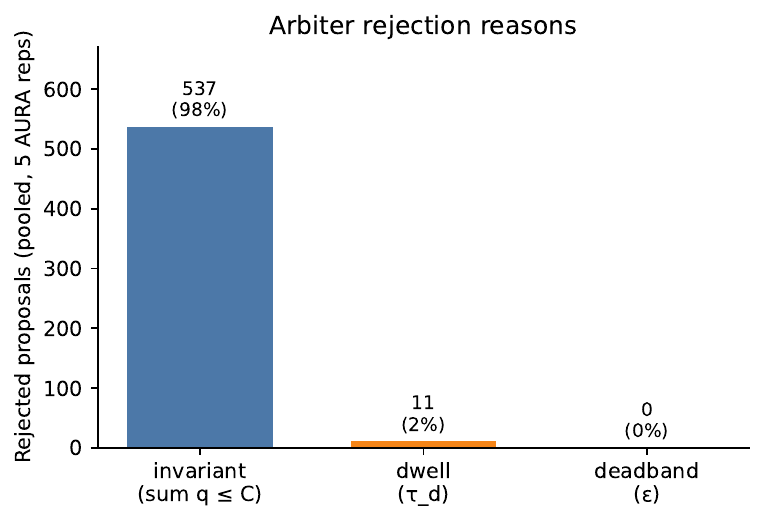}
\caption{Arbiter rejection reasons pooled over 5 arbitrated repetitions: 537 (98\%) invariant ($\sum_i q_i \le C$), 11 (2\%) dwell (6 on $q_1$, 5 on $C$), 0 deadband (step size $\delta=2$ never falls below $\epsilon=1$ in this configuration).}
\label{fig:conflicts}
\end{figure}

\subsection{Control-loop latency budget}\label{sec:latency-budget}

\begin{figure}[t]
\centering
\includegraphics[width=1\linewidth]{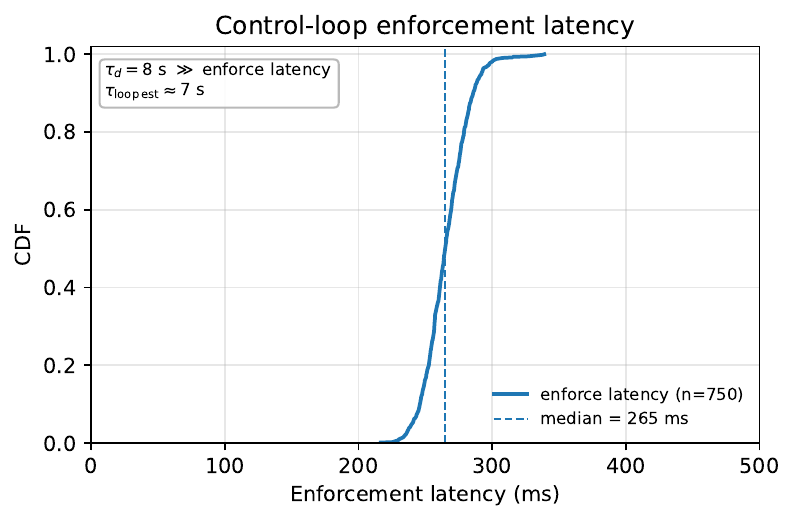}
\caption{Enforcement-latency CDF ($n{=}750$ \texttt{cp\_ms} samples pooled across all regimes); median $\approx$265\,ms. Arbiter dwell $\tau_d{=}8$\,s is set above $\tau_{\mathrm{loop\,est}}{\approx}7$\,s so the condition of Prop.~\ref{prop:conv} holds.}
\label{fig:latency}
\end{figure}

The dominant, directly measured component of $\tau_{\mathrm{loop}}$ is enforcement delivery: the file-driven enforcement step, which writes the quota table into the running gNB pod and triggers its periodic reload, has median latency 264.7\,ms (mean 265.7\,ms, minimum 216.6\,ms, maximum 338.9\,ms, $n{=}750$ pooled across all regimes and both load cycles). Combined with the agent period $T_a=5$\,s and an estimated sampling and observation overhead of ${\sim}1$\,s, this gives an end-to-end loop-delay estimate $\tau_{\mathrm{loop}} \approx 7$\,s, which is the basis for setting the dwell time to $\tau_d = 8$\,s $> \tau_{\mathrm{loop}}$ in \textsc{\sys{}} per Prop.~\ref{prop:conv}. The experiment harness refuses to start an arbitrated run whose configured $\tau_d$ does not exceed this estimate, making the condition of Prop.~\ref{prop:conv} an enforced precondition of the experiment rather than an incidental parameter choice. Arbiter admission logic itself (the three checks of Section~\ref{sec:model}) executes in-process and is negligible relative to the roughly 265\,ms enforcement step. Two conclusions follow: the arbiter's own computational overhead is negligible relative to the loop delay it governs, and in our file-driven deployment the enforcement path, not agent reasoning or arbiter logic, is the bottleneck that determines $\tau_{\mathrm{loop}}$ and therefore the safe dwell-time floor; a production E2-based enforcement path (Section~\ref{sec:impl}) would need to re-measure this component, as it may differ substantially from a file-reload mechanism.

\subsection{Sensitivity}

Our reported matrix fixes $T_a=5$\,s across all regimes and repetitions; sweeping $T_a$ and the step sizes $(\delta,\delta_c)$ to test the amplitude scaling predicted by Prop.~\ref{prop:cycle} is left to future work, as is agent-population scale. We do not claim platform-scale concurrent-agent counts.

\section{Discussion and Future Work}\label{sec:discussion}
 
\textbf{Composite proposals.}
\sys{} currently admits actions as independent single-variable proposals.
As demonstrated in Section~\ref{sec:eval-agents}, this atomic approach successfully bounds shared
state and eliminates cross-slice starvation, but it cannot simultaneously grow a slice
quota~($q_1$) and the energy cap~($C$) in a single step.
The natural extension is \emph{composite proposals}: multi-variable atomic actions that
allow an agent to request additional quota strictly contingent on a simultaneous cap
expansion, enabling Pareto-improving state transitions while preserving the arbiter's
feasibility invariants. Specifically, a composite proposal would allow $A_{\mathrm{sl}}$ to request a simultaneous increase in both $q_1$ and $C$. This guarantees that the newly requested quota is physically backed by the expanded cap, bypassing the invariant conflicts that currently bound slice-1's latency improvements. Since the arbiter evaluates post-state feasibility, this requires expanding the agents' action space, not the underlying arbitration architecture.
 
\textbf{Principled safety--performance trade-offs.}
\sys{} guarantees feasibility and bounded shared state, not per-agent optimality.
Our measurements confirm this precisely: slice-2 throughput violations fall from 40--55\%
under \textsc{Direct} and \textsc{Single} to 0.3\% under \sys{}, while slice-1's
latency-violation rate does not improve.
\textsc{Direct}'s nominally lower slice-1 violation (84.5\% versus \sys{}'s 92.9\%) is
not a genuine gain; it is achieved by permitting \Asl{} to grow $q_1$ at the direct
expense of slice-2, whose throughput is starved by 40--55\%.
No single-slice SLA metric captures this cross-slice cost.
\sys{} makes the true system-wide cost explicit and enforces a feasible operating point
rather than allowing one agent's metric to improve at another slice's expense.
A static control run at \sys{}'s settled operating point $(q_1,q_2,C){=}(26,18,44)$
confirms that the gap to \textsc{Static} on slice-1 latency shrinks to 4.2 percentage
points but is not eliminated, with a residual attributed to live-dynamics effects not
present in the static control; precise attribution is left to future work.
 
\textbf{Scope and scalability.}
Our empirical demonstration uses one cell, two structurally opposed agents, and RF-simulated
radio to isolate the control-plane instability from channel fading and multi-cell mobility.
This scope is deliberate: the pathology is rooted in delayed feedback over shared state,
none of its ingredients require real radio propagation, and the RF simulator provides exact
one-way delay measurement unavailable in over-the-air experiments.
Future work will extend the evaluation to multi UE, multi-cell, many-agent deployments over
an over-the-air (OTA) testbed; the arbiter's population-agnostic admission logic requires no
modification, but quantifying how conflict rates scale with agent population remains an open
question. Beyond scaling the deployment, we plan to replace our rule-based agents with a real
learned policy, either a DRL controller or an LLM-backed planner fine-tuned for the SLA and
efficiency objectives, and validate \sys{} against it on that OTA testbed. This would test the
claim of Section~\ref{sec:model} directly: that the arbiter's guarantees hold regardless of how
an agent decides, not only for the rule-based agents evaluated here.
Enforcement beyond quota semantics (positional PRB partitions, per-slice schedulers) and
energy actuation beyond the PRB-cap proxy are also natural extensions.
 
\textbf{Generality of the framework.}
The propositions in Section~\ref{sec:theory} require only opposing monotone responses over shared
variables and a delayed feedback loop; they are not specific to network slicing.
Any scenario in which autonomous controllers with contradictory objectives share network
state will exhibit the same instability template, including power control versus interference
management, or traffic steering versus energy saving.
The \sys{} arbiter carries over to these domains without modification.
An immediate operational extension, supported by the existing event log, is the continuous
verification of declared effects against observed outcomes, enabling the orchestration layer
to quarantine agents whose actual behavior diverges from their proposals.
\section{Conclusion}

We demonstrated on a live O-RAN stack with real delay and throughput measurements that individually correct agents jointly induce recurring, destabilizing shared-state excursions. To resolve this interaction pathology, we presented a lightweight, three-check arbitration layer carrying a formal convergence guarantee. Our arbiter reduces shared-state movement by an order of magnitude and virtually eliminates cross-slice throughput starvation. While it leaves the protected slice's latency compliance unimproved a trade off we explicitly report as a useful negative result it successfully enforces a safe, feasible operating point.

The broader architectural takeaway is clear: the agentic RAN requires an enforcement boundary where actions meet the network. Crucially, because the arbiter evaluates proposed state modifications rather than internal agent logic, its admission overhead remains negligible regardless of the agent population size. To make both the pathology and its repair reproducible, our enforcement patch, arbiter, and measurement harness will be released.
\appendix
\section{Host CPU on pinned cores}\label{app:cpu}

Table~\ref{tab:cpu} reports mean CPU utilization on the core-pinned softmodem sets during the agent matrix, sampled from the host \texttt{/proc/stat} every 5\,s into the experiment event log. Means stay near 10\% across all regimes and phases, with 95th percentiles under 13.4\% on the gNB set, well below the dual-UE starvation band (64--71\%) documented in Section~\ref{sec:actuator-gate}. The largest regime difference is roughly 2 percentage points (between \textsc{Static} and \textsc{\sys{}} on the gNB set), confirming that host scheduling contention does not explain the shared-state or SLA contrasts in Section~\ref{sec:eval}.

\begin{table}[ht]
\caption{Mean CPU utilization (\%) on pinned core sets, pooled over 5 reps per regime. gNB: cores 0--7; UE$_1$: 8--11; UE$_2$: 12--15.}
\label{tab:cpu}
\centering
\footnotesize
\begin{tabular}{llccc}
\toprule
Regime & Phase & gNB & UE$_1$ & UE$_2$ \\
\midrule
\multirow{3}{*}{\textsc{Static}}
 & steady    & 10.2 & 9.1 & 9.7 \\
 & step\_up  &  9.8 & 9.4 & 9.7 \\
 & step\_down& 9.5 & 8.9 & 10.0 \\
\midrule
\multirow{3}{*}{\textsc{Single}}
 & steady    & 10.4 & 9.1 & 9.1 \\
 & step\_up  & 10.3 & 9.1 & 9.1 \\
 & step\_down& 9.7  & 9.4 & 8.9 \\
\midrule
\multirow{3}{*}{\textsc{Direct}}
 & steady    & 10.6 & 9.8 & 9.4 \\
 & step\_up  & 10.4 & 9.2 & 8.9 \\
 & step\_down& 10.1 & 9.5 & 8.7 \\
\midrule
\multirow{3}{*}{\textsc{\sys{}}}
 & steady    & 11.7 & 9.2 & 8.1 \\
 & step\_up  & 11.7 & 9.2 & 8.1 \\
 & step\_down& 11.1 & 9.1 & 7.8 \\
\bottomrule
\end{tabular}
\end{table}



\begin{thebibliography}{99}\itemsep 2pt

\bibitem{polese2023understanding}
M.~Polese, L.~Bonati, S.~D'Oro, S.~Basagni, and T.~Melodia, 
"Understanding O-RAN: Architecture, interfaces, algorithms, security, 
and research challenges," IEEE Commun. Surveys Tuts., vol.~25, no.~2, 
pp.~1376--1411, 2023.

\bibitem{delrio2025pacifista}
P.~Brach del Prever, S.~D'Oro, L.~Bonati, M.~Polese, M.~Tsampazi, H.~Cheng, and T.~Melodia, ``PACIFISTA: Conflict evaluation and management in Open RAN,'' \emph{IEEE Trans.\ Mobile Comput.}, 2025, early access, doi:\,10.1109/TMC.2025.3570632.

\bibitem{zolghadr2025conflict}
A.~Zolghadr, J.~F.~Santos, A.~Nolan, and L.~A.~DaSilva, ``Mitigating xApp conflicts for efficient network slicing in 6G O-RAN: A graph convolutional-based attention network approach,'' arXiv:2504.17590, 2025.

\bibitem{adamczyk2023conflict}
C.~Adamczyk and A.~Kliks, ``Conflict mitigation framework and conflict detection in O-RAN Near-RT RIC,'' \emph{IEEE Commun.\ Mag.}, vol.~61, no.~12, pp.~12--18, Dec.\ 2023, doi:\,10.1109/MCOM.018.2200752.

\bibitem{foukas2017orion}
X.~Foukas, M.~K.~Marina, and K.~Kontovasilis, ``Orion: RAN slicing for a flexible and cost-effective multi-service mobile network architecture,'' in \emph{Proc.\ ACM MobiCom}, 2017, pp.~127--140.

\bibitem{ksentini2017slicing}
A.~Ksentini and N.~Nikaein, ``Toward enforcing network slicing on RAN: Flexibility and resources abstraction,'' \emph{IEEE Commun.\ Mag.}, vol.~55, no.~6, pp.~102--108, 2017.

\bibitem{kokku2012nvs}
R.~Kokku, R.~Mahindra, H.~Zhang, and S.~Rangarajan, ``NVS: A substrate for virtualizing wireless resources in cellular networks,'' \emph{IEEE/ACM Trans.\ Netw.}, vol.~20, no.~5, pp.~1333--1346, 2012.

\bibitem{foukas2016flexran}
X.~Foukas, N.~Nikaein, M.~M.~Kassem, M.~K.~Marina, and K.~Kontovasilis, ``FlexRAN: A flexible and programmable platform for software-defined radio access networks,'' in \emph{Proc.\ ACM CoNEXT}, 2016, pp.~427--441.

\bibitem{schmidt2021flexric}
R.~Schmidt, M.~Irazabal, and N.~Nikaein, ``FlexRIC: An SDK for next-generation SD-RANs,'' in \emph{Proc.\ ACM CoNEXT}, 2021, pp.~411--425.

\bibitem{cheng2024oranslice}
H.~Cheng, S.~D'Oro, R.~Gangula, S.~Velumani, D.~Villa, L.~Bonati, M.~Polese, G.~Arrobo, C.~Maciocco, and T.~Melodia, ``ORANSlice: An open-source 5G network slicing platform for O-RAN,'' in \emph{Proc.\ ACM MobiCom Workshops (Open AI RAN)}, 2024, doi:\,10.1145/3636534.3701544.

\bibitem{xslice2025}
P.~Yan, J.~Lu, H.~Zeng, and Y.~T.~Hou, ``xSlice: Near-real-time resource 
slicing for QoS optimization in 5G O-RAN using deep reinforcement learning,'' 
arXiv:2509.14343, 2025.

\bibitem{oranwg3}
O-RAN Alliance WG3, ``Near-Real-Time RAN Intelligent Controller Architecture,'' Technical Specification O-RAN.WG3.RICARCH, 2023.

\bibitem{monderer1996potential}
D.~Monderer and L.~S.~Shapley, ``Potential games,'' \emph{Games and Economic Behavior}, vol.~14, no.~1, pp.~124--143, 1996.

\bibitem{liberzon2003switching}
D.~Liberzon, \emph{Switching in Systems and Control}. Boston, MA, USA: Birkh\"auser, 2003.

\bibitem{wu2024netllm}
D.~Wu, X.~Wang, Y.~Qiao, Z.~Wang, J.~Jiang, S.~Cui, and F.~Wang, ``NetLLM: Adapting large language models for networking,'' in \emph{Proc.\ ACM SIGCOMM}, 2024.

\end{thebibliography}
\end{document}